\newcommand{\mypreamble}{
\title[\shortTITLE]{\TITLE}
\author{Ji\v{r}\'{\i} Proch\'{a}zka}
\email{jiri.prochazka@fzu.cz}
\affiliation{Institute of Physics of the Czech Academy of Sciences,\\ Na Slovance 2, 18221 Prague 8, Czech Republic} 

\date{\today}
}
\newcommand{\TITLE}{Stochastic state-transition-change process and \\ time resolved velocity spectrometry}
\newcommand{\shortTITLE}{Stochastic STC process and time resolved velocity spectrometry}
\newkeycommand{\dElementSymbol}[event]{ \ensuremath{ \text{d}(\commandkey{event})}}
\newcommand{\indexset}{I}
\newcommand{\indexsetsequence}{ i \in (0, \dotsc, \ntrans) }
\newcommand{\indexsetsequencemone}{ i \in (0, \dotsc, \ntrans-1)}
\newcommand{\stochproc}{\{\Xw[i=i] : i \in \indexset \}}
\newcommand{\varsymb}{X}
\newcommand{\wvarsymb}{\varsymb}
\newkeycommand{\Xw}[i=]{\wvarsymb_{\commandkey{i}}}
\newkeycommand{\dXw}[i=]{\text{d}\wvarsymb_{\commandkey{i}}}
\newkeycommand{\Xwin}[i=]{\wvarsymb_{\text{in}}^{\commandkey{i}}}
\newkeycommand{\dXwin}[i=]{\text{d}\wvarsymb_{\text{in}}^{\commandkey{i}}}
\newkeycommand{\Xwout}[i=]{\wvarsymb_{\text{out}}^{\commandkey{i}}}
\newkeycommand{\dXwout}[i=]{\text{d}\wvarsymb_{\text{out}}^{\commandkey{i}}}
\newkeycommand{\XwNR}[i=]{\wvarsymb^{NR}_{\commandkey{i}}}
\newkeycommand{\XwinNR}[i=]{\wvarsymb_{\text{in}\ifcommandkey{i}{,{\commandkey{i}}}{}}^{NR}}
\newkeycommand{\XwoutNR}[i=]{\wvarsymb_{\text{out}\ifcommandkey{i}{,{\commandkey{i}}}{}}^{NR}}
\newcommand{\ntrans}{M}  
\newcommand{\rangeXwNR}[2]{%
   \ifthenelse{ \equal{\detokenize{#1}}{\detokenize{#2}}}{\XwNR[i=#1]}{
		   \ifthenelse{ \equal{\detokenize{#1}}{\detokenize{0}} \AND \equal{\detokenize{#2}}{\detokenize{1}} }{\XwNR[i=#1], \XwNR[i=#2]}{
				   \ifthenelse{ \equal{\detokenize{#1}}{\detokenize{#2}} }{\XwNR[i=#1]}{\XwNR[i=#1], \dotsc, \XwNR[i=#2]}
		   }
   }
}
\newkeycommand{\ifcomma}[i=]{\ifcommandkey{i}{,{\commandkey{i}}}{}}
\newcommand{\statespacesymbol}{S}
\newcommand{\stateelementsymbol}{s}
\newcommand{\stateset}[1]{\statespacesymbol_{#1}}
\newcommand{\statee}[1]{\stateelementsymbol_{#1}}
\DeclareMathOperator{\dos}{dos}
\newcommand{\dosi}[1]{\dos_{#1}}
\newcommand{\dosargSi}[1]{\dos_{#1}(\Xw[i=#1])}
\NewDocumentCommand{\stateSoutCondASi}{ O{} } 
{
  \keys_set:nn { theoryouti }
  {
    #1 
  }
  \statee{\g_theoryouti_i_tl}\mid \stateset{\g_theoryouti_i_tl}
}
\newkeycommand{\rhoS}[i=]{\rho_{\stateset, \commandkey{i}}}
\newcommand{\ProbTi}[1]{P_{T\ifcomma[i=#1]}}
\newcommand{\ProbTiarg}[1]{\ProbTi{#1}(\Xw[i={#1}])} 
\newcommand{\rhoCindexsymb}{C}
\newcommand{\rhoCi}[1]{\rho_{\rhoCindexsymb\ifcomma[i=#1]}}
\newcommand{\rhoCiarg}[1]{\rhoCi{#1}(\Xw[i={#1}], \Xw[i={#1+1}])} 
\newkeycommand{\rhoCindep}[i=]{\rho_{\widetilde{\rhoCindexsymb}\ifcommandkey{i}{,{\commandkey{i}}}{}}}   
\newcommand{\refsasprocessfull}{}
\newcommand{\refsasprocessSTCmainfull}{process-stc-as:independent_realizations,process-stc-as:markov_property}
\newcommand{\refsasprocessSTCextrafull}{process-stc-as:state_prob_transition,\refsasprocessSTCmainfull}
\newcommand{\refsasprocessSTCfull}{\refsasprocessfull,\refsasprocessSTCextrafull}
\newcommand{\refsasprocessSTCsequenceextrafull}{}
\newcommand{\refsasprocessSTCsequencefull}{\refsasprocessSTCfull,\refsasprocessSTCsequenceextrafull}
\newtheorem{definition}{Definition}[section]
\newtheorem{assumption}{Assumption}[section]
\crefname{assumption}{assumption}{assumptions}
\newtheorem{proposition}{Proposition}[section]
\theoremstyle{remark}
\newtheorem{remark}{Remark}[section]
\newcommand{\intervald}[1]{ \ensuremath{ (#1, #1 + \text{d} #1) } }
\newcommand{\refsasprocessSTCsequencetofextra}{as:Xw_tof,as:time_tof,as:velocity_tof,as:force_tof,as:state_space,as:ProbT_tof}
\begin{document}
\mypreamble

\begin{abstract}
Motion of particles (bodies) in presence of random effects can be considered stochastic process. However, application of widely known stochastic processes used for description of particle motion is reduced to relatively small class of particle transport phenomena. Stochastic state-transition-change (STC) process is suitable for description of many systems. In this paper it is shown under which assumptions formulae of time resolved velocity spectrometry can be derived with the help of STC process. It opens up new possibilities of unified description of particles moving in a force field in presence of random effects. It extends possibilities of applications of theory of stochastic processes in physics. 
\end{abstract}

\keywords{stochastic state-transition-change process, stochastic particle motion, time resolved velocity spectrometry, particle transport}

\maketitle

\clearpage
\tableofcontents

\section{\label{sec:introduction}Introduction}
Motion of particles, or any other bodies, under various conditions has been studied in physics for a very long time. In some cases the motion can be described deterministically using, e.g., Newton's second law of motion. In some other cases it is necessary to take into account several random effects. Brownian motion (random walk) is well known example of stochastic process. Stochastic cyclotron motion \cite{Lemons2002} is another example of stochastic process. It introduces randomness to motion of a charged particle in a magnetic field by adding dissipation and fluctuation terms to the corresponding deterministic equation of motion (the terms can be added to any ordinary differential equation having time as an independent variable).

However, widely known stochastic processes are not suitable for description of all particle transport phenomena in presence of random effects. Stochastic state-transition-change (STC) process introduced in \cref{process-stc-sec:model_stochastic_process} in \cite{Prochazka2022statphys_process_stc} extends possibilities of descriptions of motion of particles which may have random initial properties (states), may or may not reach given position, and may or may not change their properties during transport. It will be shown in this paper under which conditions (assumptions) it is possible to derive main formulae of time resolved velocity spectrometry using STC process.

This paper is structured as follows. Derivation of formulae of widely know time resolved velocity spectrometry with the help of stochastic STC process is in \cref{sec:model}. The spectrometry is one of well known time-of-flight (TOF) methods which allows determination of spectrum of emitted particles from a source as a function of emission time and velocity on the basis of experimental data, see \cref{sec:data_analysis}. The possibilities of generalization of the time resolved velocity spectrometry with the help of STC stochastic process are discussed in \cref{sec:generalization}. Concluding remarks are in \cref{sec:conclusion}.

\section{\label{sec:model}Probability model - time resolved velocity spectrometry}
\subsection{Stochastic process}

\newcommand{\speed}{v}

\newcommand{\vmin}{v_{\text{min}}}
\newcommand{\vmax}{v_{\text{max}}}
\newcommand{\tmini}[1]{t_{#1}^{\text{min}}}
\newcommand{\tmaxi}[1]{t_{#1}^{\text{max}}}


Stochastic STC process can describe particle motion of particles in a force field when initial conditions are characterized by probability (density) functions. 

Consider a source emitting particles of different speeds in the same direction and at different times as an example of nontrivial particle motion. It may not be possible to detect the particles at the place where they are emitted but it may be possible to measure some quantities characterizing the particle transport at several distances $x_i$ from the source ($i \in (0, ..., \ntrans)$, $\ntrans >  0$, $x_{i} < x_{i+1}$ and the spatial $x$-axis has the same orientation as the direction of the velocities of the particles). One may ask how to determine the characteristics of the emitted particles at the place where they are emitted on the basis of quantities which can be measured.

\begin{assumption}[Time interval of emitted particles]
\label{as:time_tof}
The particles were emitted in a burst in time interval from $\tmini{0}$ to $\tmaxi{0}$ ($\tmini{0} < \tmaxi{0}$) at position $x_0$. 
\end{assumption}
\begin{assumption}[Velocity of emitted particles]
\label{as:velocity_tof}
All the emitted particles emitted at position $x_0$ had the same direction of velocity. The particles could have different values of speeds $v_0$ in the interval from $\vmin$ to $\vmax$ ($\vmin \leq \vmax$).
\end{assumption}
\begin{assumption}[Constant speed, zero force]
\label{as:force_tof}
The speeds of individual particles did not change during transport from position $x_{i}$ to $x_{i+1}$. I.e., no force acted on the particles during the transport (considering only inertial motion).
\end{assumption}
\begin{assumption}[Variables]
\label{as:Xw_tof}
Let particle at position $x_{i}$, at time $x_{i}$ have speed $v_{i}$. Let 
\begin{align} 
		\Xw[i=i]  &= (v_{i}, t_{i})   \\ 
		\XwNR[i=i]&= (\vmin, \vmax, \tmini{0}, \tmaxi{0}, x_0,\dotsc,x_{i})   \, .
\end{align}
for all $\indexsetsequencemone$. I.e., $v_{i}$ and $t_{i}$ are considered random variables and the other parameters are non-random variables (only some of them may or may not be explicitly written as arguments of functions in the following).
\end{assumption}
\begin{assumption}[State spaces]
\label{as:state_space}
Let state space $\stateset{i}$ contain states of particles, represented by variables $\Xw[i=i]$, when they pass through position $x_i$.
\end{assumption}
\begin{remark}
\Cref{as:state_space} implies that number of states of system which were in any state in state space $\stateset{i}$ is the same as number of particles which passed through $x_i$.
\end{remark}

\begin{assumption}[Probability of transition]
\label{as:ProbT_tof} 
It holds (for all $\indexsetsequencemone$)
\begin{equation}
    \ProbTiarg{i} = 1  \, ,
\label{eq:ProbT_tof}
\end{equation}
i.e., a particle at position $x_i$ always moved to position $x_{i+1}$, independently on value of $\Xw[i=i]$ .
\end{assumption}
%
%
%

\newcommand{\tofXwinargs}{x_{i}, v_{i}, t_{i}}
\newcommand{\tofXwvintegargs}[1]{x_{#1}, t_{#1}}

\newcommand{\dosargitof}[1]{\dos_{#1}(x_{#1}, v_{#1}, t_{#1})}
\newcommand{\dosargvintegtof}[1]{\dos_{#1}^{v}(\tofXwvintegargs{#1})}
\newcommand{\dositofX}[2]{\dosi{#1}(x_{#1},  v_{#1}=v_{#2}, t_{#1}=t_{#2} - \frac{x_{#2} - x_{#1}}{v_{#2}} )}

\begin{definition}
		If \cref{as:Xw_tof} holds then the density of states (DOS) $\dosargSi{i}$ defined by \cref{process-stc-eq:notation_dosargSi} in \cite{Prochazka2022statphys_process_stc} can be written also as 
\begin{alignat}{3}
		\dosargitof{i}            &= \dosargSi{i}           && \quad\quad\quad          \label{eq:dosargi_in_tof}
\end{alignat}
for all $\indexsetsequence$. 
\end{definition}
\begin{definition}
		Let $\dosargvintegtof{i}$ be defined by ($\indexsetsequence$)
\begin{align}
		\dosargvintegtof{i}     &= \int_{v_i} \dosargitof{i} \text{d} v_i  \, .    \label{eq:dosargvintegtof}
\end{align}
\end{definition}

\newcommand{\rhoCargitof}{\rhoCi{i}(x_{i}, v_{i}, t_{i}, x_{i+1}, v_{i+1}, t_{i+1})}
\newcommand{\rhoCargispeedfulltof}{\rhoCi{i}(x_{i}, v_{i}, t_{i}, x_{i+1}, v_{i+1}=v_{i}, t_{i+1})}
\newcommand{\rhoCargispeedtof}{\widetilde{\rho}_{\rhoCindexsymb\ifcomma[i=i]}(x_{i}, v_{i}, t_{i}, x_{i+1}, t_{i+1})}

\begin{definition}
		If \cref{as:Xw_tof} holds then function $\rhoCiarg{i}$ defined by \cref{process-stc-eq:rhoCiarg} \cite{Prochazka2022statphys_process_stc} can be written also as
\begin{align}
		\rhoCargitof     &=  \rhoCiarg{i}       & & \, . \label{eq:tof_rhoCargpol}
\end{align}
where non-random variables $x_{i}$ and $x_{i+1}$ have been written explicitly, and $\indexsetsequencemone$. The function $\rhoCiarg{i}$ has meaning of probability function that a particle at position $x_i$ of properties characterized by random variables $\Xw[i=i]$ had properties characterized by random variables $\Xw[i=i+1]$ at position $x_{i+1}$.
\end{definition}

\begin{definition}
		Let the probability density function $\rhoCiarg{i}$ satisfying the assumption of constant speed of a particle (see \cref{as:velocity_tof,as:force_tof}) be denoted as $\rhoCargispeedtof$, i.e., it holds (for all $\indexsetsequence$)
\begin{equation}
     \rhoCargispeedtof = \rhoCargispeedfulltof \, .
\end{equation}
\end{definition}

\begin{definition}[Minimal and maximal particle arrival time]
		The last (resp.~the first) particle which gets to $x_{i+1}$ ($\indexsetsequencemone$) is particle having the lowest (resp.~the highest) speed $\vmin$ (resp.~$\vmax$) which was at $x_{i}$ at time $\tmaxi{i}$ (resp.~$\tmini{i}$).
\end{definition}

\begin{definition}
\label{def:process_STC_tof}
Let $\stochproc$ be stochastic process given by \cref{process-stc-def:process_STC_sequence} in \cite{Prochazka2022statphys_process_stc}.
 I.e., it satisfies \cref{\refsasprocessSTCsequencefull} in \cite{Prochazka2022statphys_process_stc}. Let it satisfy also \cref{\refsasprocessSTCsequencetofextra}.
\end{definition}

\subsection{\label{sec:formulae}Derivation of several formulae}
The following statements will be derived assuming stochastic process given by \cref{def:process_STC_tof}, if not mentioned otherwise.

\newcommand{\refsaslinetrajectory}{as:velocity_tof,as:force_tof,as:Xw_tof} 
\begin{proposition}
\label{thm:trajectory}
		Particle of constant speed $\speed$ at position $x_{i}$ and at time $t_{i}$ reached position $x_{i+1}$ at time $t_{i+1}$ which is equal to (for all $\indexsetsequencemone$)
\begin{align}
		t_{i+1} = t_{i} + \frac{x_{i+1} - x_{i}}{\speed} \, .  \label{eq:trajectory}
\end{align}
It holds  (for all $\indexsetsequence$)
\begin{align}
		t_{i} = t_0 + \frac{x_{i} - x_0}{\speed} \, .  \label{eq:trajectory_full}
\end{align}
\end{proposition}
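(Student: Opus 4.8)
The plan is to obtain \cref{eq:trajectory} directly from the force-free kinematics fixed by the assumptions underlying \cref{def:process_STC_tof}, and then to derive \cref{eq:trajectory_full} by telescoping that relation from index $0$ up to $i$.

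First I would establish \cref{eq:trajectory}. By \cref{as:force_tof} no force acts on the particle while it is transported from $x_i$ to $x_{i+1}$, so by \cref{as:velocity_tof} its speed keeps the constant value $\speed$ on this segment; since the spatial $x$-axis is oriented along the common direction of the velocities and $x_i < x_{i+1}$, the displacement and the elapsed time are related by $x_{i+1} - x_i = \speed\,(t_{i+1} - t_i)$. A particle that actually reaches $x_{i+1} > x_i$ must have $\speed > 0$, so this relation may be solved for $t_{i+1}$, which gives \cref{eq:trajectory}. Here $v_i$ and $t_i$ are the components of $\Xw[i=i]$ as fixed in \cref{as:Xw_tof}, and constancy of the speed means $v_i = v_{i+1} = \speed$ on each segment.

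Then I would prove \cref{eq:trajectory_full} by induction on $i$, so as to cover all $\indexsetsequence$. The base case $i = 0$ is the trivial identity $t_0 = t_0$. For the inductive step, assuming $t_i = t_0 + (x_i - x_0)/\speed$ for some $\indexsetsequencemone$, I apply \cref{eq:trajectory} and obtain $t_{i+1} = t_i + (x_{i+1} - x_i)/\speed = t_0 + (x_{i+1} - x_0)/\speed$, the two displacement terms telescoping. This closes the induction and completes the argument.

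I do not expect any genuine obstacle here: the statement is the elementary kinematics of uniform rectilinear motion. The only point deserving a little care is that \cref{as:force_tof} and \cref{as:velocity_tof} are phrased segment by segment, so before summing the displacements $x_{j+1} - x_j$ into $x_i - x_0$ one must observe that the same speed $\speed$ persists along the whole chain $x_0 \to x_1 \to \dots \to x_i$; once this is noted, the telescoping is immediate.
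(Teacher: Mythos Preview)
Your proposal is correct and follows essentially the same approach as the paper, which simply states that the result follows from \cref{as:velocity_tof,as:force_tof} and Newton's second law of motion. You have merely spelled out the force-free kinematics and the telescoping step explicitly, which the paper leaves implicit.
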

\begin{proof}
It follows from \cref{as:velocity_tof,as:force_tof} and Newton's second law of motion. 
\end{proof}
\begin{proposition}
\label{thm:tmini_tmaxi}
It holds (for all $\indexsetsequencemone$)
\begin{align}
		\tmini{i+1} &= \tmini{i} + \frac{x_{i+1} - x_{i}}{\vmax}      \label{eq:tmini}\\
		\tmaxi{i+1} &= \tmaxi{i} + \frac{x_{i+1} - x_{i}}{\vmin} \, . \label{eq:tmaxi}
\end{align}
and (for all $\indexsetsequence$)
\begin{align}
		\tmini{i} &= \tmini{0} + \frac{x_{i} - x_0}{\vmax}      \label{eq:tmini_full}\\
		\tmaxi{i} &= \tmaxi{0} + \frac{x_{i} - x_0}{\vmin} \, . \label{eq:tmaxi_full}
\end{align}
\end{proposition}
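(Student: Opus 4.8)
The plan is to read off the recurrences \cref{eq:tmini,eq:tmaxi} as special cases of the trajectory formula \cref{eq:trajectory} in \cref{thm:trajectory}, and then to obtain the closed forms \cref{eq:tmini_full,eq:tmaxi_full} by iterating those recurrences (equivalently, as further special cases of \cref{eq:trajectory_full}).

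First I would record the elementary monotonicity that underlies the preceding definition of minimal and maximal particle arrival time: for $x_{i+1} > x_i$ and speed $\speed \in [\vmin,\vmax]$ with $\speed>0$, the right-hand side of \cref{eq:trajectory} is strictly increasing in the departure time $t_i$ and non-increasing in $\speed$. Since by \cref{as:force_tof} each particle keeps a fixed speed during transport, among the particles present at $x_i$ the one reaching $x_{i+1}$ latest is the slowest one ($\speed=\vmin$) that left $x_i$ last (at time $\tmaxi{i}$), and the one reaching $x_{i+1}$ earliest is the fastest one ($\speed=\vmax$) that left $x_i$ first (at time $\tmini{i}$); this is exactly what that definition asserts.

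Then \cref{eq:tmini} is \cref{eq:trajectory} evaluated at $\speed=\vmax$, $t_i=\tmini{i}$, and \cref{eq:tmaxi} is \cref{eq:trajectory} evaluated at $\speed=\vmin$, $t_i=\tmaxi{i}$. For the closed forms I would either substitute $\speed=\vmax$, $t_0=\tmini{0}$ and $\speed=\vmin$, $t_0=\tmaxi{0}$ into \cref{eq:trajectory_full} directly, or prove \cref{eq:tmini_full,eq:tmaxi_full} by induction on $i$ using the recurrences together with the telescoping identity $x_i-x_0=\sum_{k=0}^{i-1}(x_{k+1}-x_k)$.

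I do not anticipate a real obstacle. The only point requiring care is the identification of which speed/departure-time pair realizes each extremal arrival time at $x_{i+1}$, and this has already been settled by the definition immediately preceding the statement, so it only needs to be quoted rather than re-derived; everything else is substitution into \cref{thm:trajectory}.
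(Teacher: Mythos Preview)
Your proposal is correct and matches the paper's own proof, which simply records that the result follows from \cref{as:time_tof,as:velocity_tof,as:force_tof,as:Xw_tof} together with \cref{eq:trajectory,eq:trajectory_full}. You merely spell out explicitly the monotonicity justifying the extremal-arrival definition and the substitution/iteration step that the paper leaves implicit.
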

\begin{proof}
It follows from \cref{as:time_tof,\refsaslinetrajectory} and \cref{eq:trajectory,eq:trajectory_full}.
\end{proof}

\begin{proposition}
$t_{i} \in [\tmini{i},\tmaxi{i}]$ is equivalent to $v_{i} \in [\vmin,\vmax]$ (for all $\indexsetsequence$).
\end{proposition}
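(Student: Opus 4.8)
The plan is to reduce the claimed biconditional to the linear trajectory relation of \cref{thm:trajectory} together with the arrival-time formulae of \cref{thm:tmini_tmaxi}. Write $v := v_i$ for the particle's (constant) speed; by \cref{eq:trajectory_full} one has $t_i = t_0 + (x_i-x_0)/v$, while \cref{eq:tmini_full,eq:tmaxi_full} give $\tmini{i} = \tmini{0} + (x_i-x_0)/\vmax$ and $\tmaxi{i} = \tmaxi{0} + (x_i-x_0)/\vmin$. The single computational ingredient is that, since the positions are ordered so that $x_0 \le x_i$ and all speeds are positive, the map $v \mapsto (x_i-x_0)/v$ is non-increasing.

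First I would prove $v_i \in [\vmin,\vmax] \Rightarrow t_i \in [\tmini{i},\tmaxi{i}]$: from $\vmin \le v \le \vmax$ and monotonicity, $(x_i-x_0)/\vmax \le (x_i-x_0)/v \le (x_i-x_0)/\vmin$; adding the emission-time bounds $\tmini{0}\le t_0\le\tmaxi{0}$ of \cref{as:time_tof} and substituting the three displayed identities gives $\tmini{i}\le t_i\le\tmaxi{i}$. For the reverse implication I would note that for every particle of the process $v_i=v_0$ by \cref{as:force_tof} and $v_0\in[\vmin,\vmax]$ by \cref{as:velocity_tof}, so $v_i\in[\vmin,\vmax]$ holds unconditionally; together with the forward implication this means $t_i\in[\tmini{i},\tmaxi{i}]$ also holds unconditionally, and hence the two membership statements are equivalent — both are satisfied by every particle, the case $i=0$ collapsing to \cref{as:time_tof,as:velocity_tof} themselves.

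The step I expect to be the main obstacle is phrasing the equivalence honestly: because the model already postulates $v_0\in[\vmin,\vmax]$, the condition $t_i\in[\tmini{i},\tmaxi{i}]$ does not by itself force $v_i\in[\vmin,\vmax]$ for an arbitrary trajectory happening to reach $x_i$ inside that time window — it is only within the process (valid emission time \emph{and} speed) that the two conditions coincide. I would therefore read the proposition as the assertion that the $t_i$-window $[\tmini{i},\tmaxi{i}]$ and the $v_i$-window $[\vmin,\vmax]$ carve out the same set of particles, i.e.\ that the image of $[\vmin,\vmax]\times[\tmini{0},\tmaxi{0}]$ under $(v,t_0)\mapsto t_i$ is exactly $[\tmini{i},\tmaxi{i}]$; the endpoints are hit by the extreme-speed, extreme-emission-time particles of the minimal/maximal arrival-time definition and the interior by intermediate speeds, which is what makes the two windows interchangeable in the formulae derived afterwards.
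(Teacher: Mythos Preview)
Your argument is correct and rests on the same ingredients the paper invokes --- the paper's own proof is the single line ``It follows from \cref{eq:tmini_full,eq:tmaxi_full}'', and your use of \cref{eq:trajectory_full} together with \cref{as:time_tof,as:velocity_tof,as:force_tof} simply unpacks what that citation means. Your third paragraph, observing that within the process both membership conditions hold for every particle (so the equivalence is in that sense automatic) and that the substantive content is that the image of $[\vmin,\vmax]\times[\tmini{0},\tmaxi{0}]$ under $(v,t_0)\mapsto t_i$ is exactly $[\tmini{i},\tmaxi{i}]$, is a clarification the paper does not make explicit but which accurately captures how the proposition is used downstream.
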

\begin{proof}
It follows from \cref{eq:tmini_full,eq:tmaxi_full}.
\end{proof}

\newcommand{\toftrajectoryspeed}{t_{i} - t_{i+1} + \frac{x_{i+1} - x_{i}}{v_{i}}}
\newcommand{\rhoCargideltatrajectoryspeedtof}{\delta(\toftrajectoryspeed)}
\newcommand{\rhoCargideltaspeedtof}{\rhoCargideltatrajectoryspeedtof}
\begin{proposition}
\label{thm:rho_C_tof}
 It holds  (for all $\indexsetsequencemone$)
\begin{align}
\rhoCargispeedtof      &=  \rhoCargideltaspeedtof \, . \label{eq:rho_C_tof} 
\end{align}
\end{proposition}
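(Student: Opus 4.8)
The plan is to combine the probabilistic interpretation of $\rhoCiarg{i}$ recorded after \cref{eq:tof_rhoCargpol} with the rigidity of the motion under the present assumptions. Recall that $\rhoCiarg{i}$ is the conditional probability density of the state $\Xw[i=i+1]=(v_{i+1},t_{i+1})$ of a particle at $x_{i+1}$ given that it was in state $\Xw[i=i]=(v_i,t_i)$ at $x_i$. First I would use \cref{as:force_tof}: no force acts during transport, so $v_{i+1}=v_i$ with certainty. This is precisely the passage from $\rhoCiarg{i}$ to the reduced density $\rhoCargispeedtof$ — once the speed is frozen it is no longer a free coordinate at $x_{i+1}$, and $\rhoCargispeedtof$ is a conditional density in the single remaining variable $t_{i+1}$, given $x_i,v_i,t_i$ and $x_{i+1}$.

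Second, I would invoke \cref{thm:trajectory}: for a particle of speed $v_i$ at $x_i$ at time $t_i$, the arrival time at $x_{i+1}$ is the single value $t_{i+1}=t_i+(x_{i+1}-x_i)/v_i$. Thus, conditionally on the state at $x_i$, the variable $t_{i+1}$ is deterministic, so its conditional density must be the Dirac measure supported at that value,
\[
\rhoCargispeedtof = \delta\!\left( t_{i+1} - t_i - \frac{x_{i+1}-x_i}{v_i} \right) .
\]
Since $\delta$ is even, the right-hand side equals $\delta(t_i - t_{i+1} + (x_{i+1}-x_i)/v_i) = \rhoCargideltaspeedtof$, which is exactly \cref{eq:rho_C_tof}.

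The step I expect to require the most care is the first one: making rigorous, within the measure-theoretic setup of \cite{Prochazka2022statphys_process_stc}, the statement that the conditional density of a deterministically constrained variable is a Dirac delta, and in particular being explicit that $\rhoCargispeedtof$ is the density obtained after the constraint $v_{i+1}=v_i$ has removed one degree of freedom, rather than a literal substitution into a density that still carries a $\delta(v_{i+1}-v_i)$ factor. One should also note the implicit requirement $v_i>0$ needed for a particle to reach $x_{i+1}$ at all, so that the fraction is well defined. Once these points are settled, the conclusion follows immediately from \cref{thm:trajectory,as:force_tof}.
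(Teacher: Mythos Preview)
Your argument is correct and follows essentially the same route as the paper: the paper's proof simply states that the result follows from \cref{as:velocity_tof,as:force_tof} and \cref{eq:trajectory}, which are precisely the ingredients you invoke (constant speed forcing $v_{i+1}=v_i$, and the deterministic arrival time from \cref{thm:trajectory} forcing the conditional density in $t_{i+1}$ to be a Dirac delta). Your additional remarks on the evenness of $\delta$ and the measure-theoretic care are refinements beyond what the paper records, but the underlying logic is identical.
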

\begin{proof}
It follows from \cref{as:velocity_tof,as:force_tof} and \cref{eq:trajectory}.
\end{proof}

\begin{proposition}[Transformation of DOS]
\label{thm:tof_transition_dos}
It holds (for all $\indexsetsequencemone$)
\begin{align}
		\dosi{i+1}&(x_{i+1},  v_{i+1}, t_{i+1}) \nonumber\\
		& = \begin{cases}
				\dosi{i}(x_{i},  v_{i}=v_{i+1}, t_{i}=t_{i+1} - \frac{x_{i+1} - x_{i}}{v_{i+1}} ) \\ & \mkern-216mu \text{if } t_{i+1} \in [\tmini{i+1}, \tmaxi{i+1}] \text{ and } v_{i+1} \in [\vmin, \vmax] \\
				0 & \text{otherwise}
		\end{cases}
        \label{eq:dos_tof_increment}
\end{align}
where $\tmini{i+1}$ and $\tmaxi{i+1}$ can be determined using \cref{thm:tmini_tmaxi}. Equivalently, it holds (for all $\indexsetsequence$)
\begin{align}
		\dosi{i}&(x_{i},  v_{i}, t_{i}) \nonumber\\
		& = \begin{cases}
				\dosi{0}(x_0,  v_0=v_{i}, t_0=t_{i} - \frac{x_{i} - x_0}{v_{i}} ) \\ & \mkern-216mu  \text{if } t_{i} \in [\tmini{i}, \tmaxi{i}] \text{ and } v_{i} \in [\vmin, \vmax] \\
				0 & \text{otherwise}
		\end{cases}
        \label{eq:dos_tof_full}
\end{align}
where $\tmini{i}$ and $\tmaxi{i}$ are given by \cref{eq:tmini_full,eq:tmaxi_full}. 
\end{proposition}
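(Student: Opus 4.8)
The plan is to specialise the general transformation law for the density of states established in \cite{Prochazka2022statphys_process_stc}, which expresses $\dosi{i+1}$ at position $x_{i+1}$ as the integral over the state variables $\Xw[i=i]=(v_i,t_i)$ of $\dosi{i}(x_i,v_i,t_i)$ against the change kernel $\rhoCiarg{i}$, weighted by the transition probability $\ProbTiarg{i}$. By \cref{as:ProbT_tof} one has $\ProbTiarg{i}=1$, so that weight drops out and the law reduces to an integration of $\dosi{i}(x_i,v_i,t_i)\,\rhoCiarg{i}$ over $v_i$ and $t_i$.

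First I would insert the explicit constant-speed kernel. Because speed is conserved and no force acts (\cref{as:velocity_tof,as:force_tof}), $\rhoCiarg{i}$ carries a factor $\delta(v_{i+1}-v_i)$, while the accompanying temporal factor is the kernel $\rhoCargispeedtof$, which by \cref{thm:rho_C_tof} equals $\delta(t_i-t_{i+1}+\frac{x_{i+1}-x_i}{v_i})$. Performing the $v_i$-integration against $\delta(v_{i+1}-v_i)$ sets $v_i=v_{i+1}$; performing the $t_i$-integration against the remaining delta --- whose argument is affine in $t_i$ with coefficient $1$, so that it contributes no Jacobian --- sets $t_i=t_{i+1}-\frac{x_{i+1}-x_i}{v_{i+1}}$, i.e. the inverse of the trajectory relation \cref{eq:trajectory}. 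This produces the value $\dosi{i}(x_i,v_i=v_{i+1},t_i=t_{i+1}-\frac{x_{i+1}-x_i}{v_{i+1}})$ in the non-trivial branch.

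Next I would pin down the support. The outcome is nonzero only when this preimage lies in the support of $\dosi{i}$, hence only when $v_{i+1}=v_i\in[\vmin,\vmax]$ and $t_i=t_{i+1}-\frac{x_{i+1}-x_i}{v_{i+1}}\in[\tmini{i},\tmaxi{i}]$. Using \cref{eq:tmini,eq:tmaxi} of \cref{thm:tmini_tmaxi} together with the fact that the extremal arrival times at $x_{i+1}$ are realised by the extremal speeds (the definition of $\tmini{i+1},\tmaxi{i+1}$), the condition $t_i\in[\tmini{i},\tmaxi{i}]$ is equivalent to $t_{i+1}\in[\tmini{i+1},\tmaxi{i+1}]$; this gives \cref{eq:dos_tof_increment}. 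Finally, \cref{eq:dos_tof_full} follows by iterating \cref{eq:dos_tof_increment} from step $i$ down to step $0$ (equivalently, a short induction on $i$): since speed is conserved along the whole chain, the successive substitutions compose into the single shift $t_0=t_i-\frac{x_i-x_0}{v_i}$ of \cref{eq:trajectory_full}, and the one-step bounds telescope into \cref{eq:tmini_full,eq:tmaxi_full}.

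The step I expect to be the main obstacle is the bookkeeping around the distributional kernel: one must verify that $\delta(v_{i+1}-v_i)$ together with $\rhoCargispeedtof$ is indeed the correct joint form of $\rhoCiarg{i}$ under \cref{as:velocity_tof,as:force_tof}, and that integrating $\delta(t_i-t_{i+1}+\frac{x_{i+1}-x_i}{v_i})$ over $t_i$ introduces no extra factor --- which holds precisely because the $t_i$-derivative of its argument is $1$. Once the kernel is handled correctly, the support analysis is a routine consequence of \cref{thm:tmini_tmaxi} and the definition of the minimal and maximal arrival times, and the passage to \cref{eq:dos_tof_full} is a straightforward induction.
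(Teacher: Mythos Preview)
Your argument is correct but takes a different route from the paper's primary proof. The paper argues directly from the meaning of the density of states as number of particles per unit $(v,t)$-area: since speed is conserved (\cref{as:velocity_tof,as:force_tof}) and the trajectory map $(v_i,t_i)\mapsto(v_i,\,t_i+(x_{i+1}-x_i)/v_i)$ from \cref{eq:trajectory} carries each infinitesimal cell at $x_i$ bijectively onto a cell of the same area at $x_{i+1}$, the density is simply transported along trajectories; vanishing outside the physical region and the equivalence of \cref{eq:dos_tof_increment} with \cref{eq:dos_tof_full} then follow from \cref{thm:trajectory}. Your approach --- specialising the general STC transition theorem from \cite{Prochazka2022statphys_process_stc} with $\ProbTiarg{i}=1$ and integrating against a two-dimensional delta kernel --- is precisely the alternative derivation the paper records in the Remark immediately following its proof, where it cautions that multi-dimensional delta functions are ``more delicate'' to handle. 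The paper's direct counting argument is shorter and sidesteps the distributional bookkeeping you flag as the main obstacle; your route has the merit of exhibiting the result explicitly as a special case of the abstract STC machinery, and you correctly isolate the key technical point (the $t_i$-coefficient in the delta argument being $1$, hence no extra Jacobian factor). One minor imprecision: the equivalence you assert between $t_i\in[\tmini{i},\tmaxi{i}]$ and $t_{i+1}\in[\tmini{i+1},\tmaxi{i+1}]$ is only one-sided for a fixed $v_{i+1}\in(\vmin,\vmax)$, but this is harmless since $\dosi{i}$ already vanishes outside its own support, so the right-hand side of \cref{eq:dos_tof_increment} still returns the correct value.
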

\begin{proof}
According to \cref{as:velocity_tof,as:force_tof} it must hold $v_{i}=v_{i+1}$. The number of particles in interval $\intervald{v_{i}}\times\intervald{t_{i}}$ at position $x_{i}$ and time $t_{i}$ divided by $\text{d}v_{i}\text{d}t_{i}$ is equal to $\dosi{i}(x_{i},  v_{i}, t_{i})$ (see definition of DOS given by \cref{process-stc-eq:def_dos} in \cite{Prochazka2022statphys_process_stc}). The particles reach $x_{i+1}$ at time $t_{i+1}$ given by \cref{eq:trajectory}. It implies . $\dosi{i+1}(x_{i+1},  v_{i+1}, t_{i+1}) =0 $ in regions of $t_{i+1}$ and $v_{i+1}$ which are outside physical region. 
The equivalence of \cref{eq:dos_tof_increment,eq:dos_tof_full} can be proven using \cref{thm:trajectory}.
\end{proof}
\begin{remark}
		\Cref{thm:tof_transition_dos} can be derived in another way using \cref{process-stc-thm:transition_dos} in \cite{Prochazka2022statphys_process_stc} and \cref{eq:ProbT_tof} and function $\rhoCargitof$ expressed as a 2-dimensional delta function corresponding to \cref{\refsaslinetrajectory}. To work with $n$-dimensional delta functions is, however, in general more delicate than in 1-dimensional case.
\end{remark}

\newcommand{\integspeedlimits}{\int^{\vmax}_{\vmin}}

\newcommand{\integspeedlimitsfull}{\int^{\vmax}_{\vmin}}

\begin{proposition}
\label{thm:dosargvintegtof_given_by_integration}
 It holds
\begin{align}
		\dosargvintegtof{i}
		& = \begin{cases}
		\integspeedlimitsfull
        \dosargitof{i}
		\text{d} v_{i}
		&  \text{if } t_{i} \in [\tmini{i}, \tmaxi{i}] 
		\label{eq:dosargvintegtof_given_by_integration} \\
		0 
		& \text{otherwise}
		\end{cases}
\end{align} 
where $\tmini{i}$ and $\tmaxi{i}$ are given by \cref{eq:tmini_full,eq:tmaxi_full}. 
\end{proposition}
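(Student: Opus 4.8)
The plan is to start from the defining relation \cref{eq:dosargvintegtof}, which expresses $\dosargvintegtof{i}$ as the integral of $\dosargitof{i}$ over all values of $v_{i}$, and then to invoke the support properties of the DOS already recorded in \cref{thm:tof_transition_dos}. By \cref{eq:dos_tof_full}, the integrand $\dosargitof{i}$ equals the appropriately shifted initial DOS precisely when both $t_{i} \in [\tmini{i}, \tmaxi{i}]$ and $v_{i} \in [\vmin, \vmax]$, and it vanishes otherwise. Hence the whole computation reduces to a single case distinction on whether $t_{i}$ lies in the admissible arrival-time interval $[\tmini{i}, \tmaxi{i}]$, with $\tmini{i}$ and $\tmaxi{i}$ supplied by \cref{eq:tmini_full,eq:tmaxi_full} (equivalently \cref{thm:tmini_tmaxi}).

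In the case $t_{i} \notin [\tmini{i}, \tmaxi{i}]$ the first branch of \cref{eq:dos_tof_full} is never selected, so $\dosargitof{i} = 0$ for every $v_{i}$; then \cref{eq:dosargvintegtof} immediately gives $\dosargvintegtof{i} = 0$, which is the second branch of the claim. In the complementary case $t_{i} \in [\tmini{i}, \tmaxi{i}]$, the integrand still vanishes for all $v_{i}$ outside $[\vmin, \vmax]$ — again directly from \cref{eq:dos_tof_full} — so the range of integration in \cref{eq:dosargvintegtof} may be narrowed from all $v_{i}$ to $[\vmin, \vmax]$ without changing the value of the integral, yielding $\dosargvintegtof{i} = \integspeedlimitsfull \dosargitof{i}\,\text{d} v_{i}$, which is the first branch.

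Running this through for all $\indexsetsequence$ completes the argument. I do not expect a genuine obstacle here: the only points requiring a moment of care are that the two cases are exhaustive and that the vanishing of the integrand on the set where $v_{i} \notin [\vmin, \vmax]$ legitimately allows replacing $\int_{v_{i}}$ by $\integspeedlimitsfull$ — both of which are immediate consequences of the piecewise form of \cref{eq:dos_tof_full}. The substantive content has effectively been front-loaded into \cref{thm:tof_transition_dos,thm:tmini_tmaxi}, so this proposition is essentially a bookkeeping corollary of them.
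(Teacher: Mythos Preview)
Your argument is correct and follows the same approach as the paper: insert the piecewise form of the DOS from \cref{thm:tof_transition_dos} into the defining integral \cref{eq:dosargvintegtof} and read off the two cases. The only cosmetic difference is that the paper cites \cref{eq:dos_tof_increment} rather than the equivalent \cref{eq:dos_tof_full}, and leaves the case distinction implicit where you spell it out.
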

\begin{proof}
Insertion of $\dosi{i}(x_{i}, v_{i}, t_{i})$ given by \cref{eq:dos_tof_increment} into \cref{eq:dosargvintegtof} implies \cref{{eq:dosargvintegtof_given_by_integration}}. 
\end{proof}

\begin{proposition}
\label{thm:tof}
		It holds (for all $\indexsetsequencemone$)
\begin{align}
		&\dosargvintegtof{i+1}
		= \noindent \\
		&\qquad \int_{\vmin}^{\vmax}
		\int_{\tmini{i}}^{\tmaxi{i}}   
        \dosargitof{i}
		\rhoCargideltatrajectoryspeedtof
		\text{d} t_{i} 
		\text{d} v_{i}  
		\label{eq:dos_tof_vlad1}
		\, ,
\end{align} 
if $t_{i+1} \in [\tmini{i+1},\tmaxi{i+1}]$ (see \cref{eq:tmini,eq:tmaxi}), otherwise $\dosargvintegtof{i+1}=0$.
\end{proposition}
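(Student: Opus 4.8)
The plan is to reduce \cref{eq:dos_tof_vlad1} to the already-established DOS transformation of \cref{thm:tof_transition_dos} by re-expressing the fixed-point substitution in its case branch as an integration against a one-dimensional delta function. First I would start from the definition \cref{eq:dosargvintegtof} and write $\dosargvintegtof{i+1}$ as the integral of $\dosi{i+1}(x_{i+1}, v_{i+1}, t_{i+1})$ over $v_{i+1}$; applying \cref{thm:dosargvintegtof_given_by_integration} (with $i$ replaced by $i+1$) then restricts this to $v_{i+1}\in[\vmin,\vmax]$ when $t_{i+1}\in[\tmini{i+1},\tmaxi{i+1}]$ and gives $0$ otherwise, which already reproduces the case distinction asserted in the statement.

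Next I would insert the explicit form \cref{eq:dos_tof_increment} of $\dosi{i+1}$ into the integrand: on the physical range it equals $\dosi{i}(x_{i}, v_{i}=v_{i+1}, t_{i}=t_{i+1}-\frac{x_{i+1}-x_{i}}{v_{i+1}})$. I would then re-introduce an integration over $t_i$ via the sifting property of the delta function, writing this value as
\begin{align*}
\int_{\tmini{i}}^{\tmaxi{i}} \dosargitof{i}\,\rhoCargideltatrajectoryspeedtof\,\text{d}t_i ,
\end{align*}
since the unique zero of the delta's argument is $t_i=t_{i+1}-\frac{x_{i+1}-x_{i}}{v_i}$ (cf.\ \cref{eq:trajectory}), which is exactly the time appearing in \cref{eq:dos_tof_increment}.

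The crux is to check that this truncation of the $t_i$-integral to $[\tmini{i},\tmaxi{i}]$ loses no mass: one must verify that whenever $v_i\in[\vmin,\vmax]$ and $t_{i+1}\in[\tmini{i+1},\tmaxi{i+1}]$, the zero $t_i=t_{i+1}-\frac{x_{i+1}-x_{i}}{v_i}$ indeed lies in $[\tmini{i},\tmaxi{i}]$. This follows by combining \cref{eq:trajectory} with \cref{thm:tmini_tmaxi} (i.e.\ \cref{eq:tmini,eq:tmaxi}): the smallest (resp.\ largest) admissible $t_i$ maps under \cref{eq:trajectory} to $\tmini{i+1}$ (resp.\ $\tmaxi{i+1}$), so the constraints on $t_{i+1}$ and $v_i$ pin $t_i$ to the right interval. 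Once this is in hand, I would rename $v_{i+1}$ to $v_i$ — legitimate because \cref{as:velocity_tof,as:force_tof} force $v_{i+1}=v_i$ — and use Fubini to exchange the $v_i$- and $t_i$-integrations, obtaining \cref{eq:dos_tof_vlad1}. I expect the only genuine difficulty to be this bookkeeping of integration limits (including confirming that for $t_{i+1}$ outside $[\tmini{i+1},\tmaxi{i+1}]$ the delta-zero leaves $[\tmini{i},\tmaxi{i}]$ for every admissible $v_i$, so that the two case distinctions are mutually consistent); everything else is routine delta-function manipulation. An alternative derivation would feed $\rhoCargispeedtof$ from \cref{thm:rho_C_tof} together with \cref{eq:ProbT_tof} into \cref{process-stc-thm:transition_dos} of \cite{Prochazka2022statphys_process_stc} and then integrate over $v_{i+1}$, but routing through \cref{thm:tof_transition_dos} and \cref{thm:dosargvintegtof_given_by_integration} keeps all manipulations one-dimensional.
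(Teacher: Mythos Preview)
Your main route is different from the paper's, which is essentially the one-line argument you list as an ``alternative derivation'': the paper specialises the state variables so that $\Xw[i=j]=(x_j,v_j,t_j)$ for $j\le i$ but $\Xw[i=i+1]=(x_{i+1},t_{i+1})$, and then invokes \cref{process-stc-thm:transition_dos} from \cite{Prochazka2022statphys_process_stc} together with \cref{eq:ProbT_tof,eq:rho_C_tof}. There is no separate integration over $v_{i+1}$ because the output random variable is taken to be $t_{i+1}$ alone.

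Your primary argument, however, has a gap at exactly the step you flag as the ``crux''. The claim that for every $v_i\in[\vmin,\vmax]$ and every $t_{i+1}\in[\tmini{i+1},\tmaxi{i+1}]$ the zero $t_i^{*}=t_{i+1}-(x_{i+1}-x_i)/v_i$ lies in $[\tmini{i},\tmaxi{i}]$ is false. Take $t_{i+1}=\tmini{i+1}$ and $v_i=\vmin$: by \cref{eq:tmini}, $t_i^{*}=\tmini{i}+(x_{i+1}-x_i)\bigl(1/\vmax-1/\vmin\bigr)<\tmini{i}$ whenever $\vmin<\vmax$. Your justification conflates a $v_i$-dependent shift with the extremal speeds: for fixed $v_i$ the image of $[\tmini{i},\tmaxi{i}]$ under \cref{eq:trajectory} is $[\tmini{i}+(x_{i+1}-x_i)/v_i,\;\tmaxi{i}+(x_{i+1}-x_i)/v_i]$, a proper subinterval of $[\tmini{i+1},\tmaxi{i+1}]$, not all of it.

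The repair is immediate and preserves the rest of your argument: you do not need $t_i^{*}\in[\tmini{i},\tmaxi{i}]$, because $\dosi{i}(x_i,v_i,t_i)=0$ for $t_i\notin[\tmini{i},\tmaxi{i}]$ by \cref{eq:dos_tof_full}. Hence whenever $t_i^{*}$ falls outside that interval both the truncated delta integral and the substituted value $\dosi{i}(x_i,v_i,t_i^{*})$ vanish, so no mass is lost. With this correction your derivation via \cref{thm:tof_transition_dos} and \cref{thm:dosargvintegtof_given_by_integration} goes through and provides a self-contained alternative to the paper's direct appeal to the external transition-of-DOS theorem.
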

\begin{proof}
Let us consider $\Xw[i=j]=(x_{j},v_{j},t_{j})$ for $j \in (0,\dotsc,i)$ and $\Xw[i=i+1]=(x_{i+1},t_{i+1})$ for fixed index $\indexsetsequencemone$.
		\Cref{process-stc-thm:transition_dos} in \cite{Prochazka2022statphys_process_stc} and \cref{eq:ProbT_tof,eq:rho_C_tof} imply \cref{eq:dos_tof_vlad1}. 
\end{proof}

\begin{proposition}
\label{thm:dosargvintegtof_given_by_integration_2}
 It holds (for all $\indexsetsequencemone$)
\begin{align}
		&\dosargvintegtof{i+1}
		 = \noindent \\
		&\qquad\integspeedlimits
		\dositofX{i}{i+1}
		\text{d} v_{i+1}
		\, ,
        \label{eq:dos_tof_vlad2}
\end{align} 
		if $t_{i+1} \in [\tmini{i+1},\tmaxi{i+1}]$, otherwise $\dosargvintegtof{i+1}=0$. Equivalently (for all $\indexsetsequence$),
\begin{align}
		\dosargvintegtof{i}
		& = 
		\integspeedlimits
		\dositofX{0}{i}
		\text{d} v_{i}
		\, ,
        \label{eq:dos_tof_vlad2_full}
\end{align} 
if $t_{i} \in [\tmini{i},\tmaxi{i}]$, otherwise $\dosargvintegtof{i}=0$.
\end{proposition}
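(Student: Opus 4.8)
The plan is to evaluate the inner $t_{i}$-integral in \cref{thm:tof} against the Dirac delta and then relabel the speed variable. In \cref{eq:dos_tof_vlad1} the factor $\rhoCargideltatrajectoryspeedtof$, viewed as a distribution in $t_{i}$, is $\delta(t_{i}-c)$ with $c = t_{i+1}-\frac{x_{i+1}-x_{i}}{v_{i}}$ and unit coefficient in front of $t_{i}$, so $\int_{\tmini{i}}^{\tmaxi{i}}\dosargitof{i}\,\delta(t_{i}-c)\,\text{d}t_{i}$ equals $\dosi{i}(x_{i},v_{i},t_{i}=c)$ whenever $c\in[\tmini{i},\tmaxi{i}]$ and $0$ otherwise. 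The key remark is that in the exceptional case $\dosi{i}(x_{i},v_{i},t_{i}=c)$ already vanishes: by \cref{thm:tof_transition_dos} (the index-shifted form of \cref{eq:dos_tof_increment}) $\dosi{i}$ is supported on $t_{i}\in[\tmini{i},\tmaxi{i}]$, so the two cases agree and the $t_{i}$-integral collapses to $\dosi{i}\bigl(x_{i},v_{i},t_{i}=t_{i+1}-\tfrac{x_{i+1}-x_{i}}{v_{i}}\bigr)$ for every $v_{i}\in[\vmin,\vmax]$.

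Next I would use conservation of speed. By \cref{as:velocity_tof,as:force_tof} the particle keeps its speed, $v_{i}=v_{i+1}$, so renaming the dummy integration variable $v_{i}\to v_{i+1}$ turns the surviving integrand into exactly $\dositofX{i}{i+1}$; this is \cref{eq:dos_tof_vlad2}, with the stated caveat $t_{i+1}\in[\tmini{i+1},\tmaxi{i+1}]$ (and $\dosargvintegtof{i+1}=0$ otherwise) inherited verbatim from \cref{thm:tof}.

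For the equivalent "full" form \cref{eq:dos_tof_vlad2_full} I would not chain \cref{eq:dos_tof_vlad2} down to index $0$; instead I would start from \cref{thm:dosargvintegtof_given_by_integration}, which already gives $\dosargvintegtof{i}=\integspeedlimitsfull\dosargitof{i}\,\text{d}v_{i}$ for $t_{i}\in[\tmini{i},\tmaxi{i}]$ and $0$ otherwise, and substitute the closed form of $\dosi{i}$ from \cref{eq:dos_tof_full}. On the physical region this replaces $\dosi{i}(x_{i},v_{i},t_{i})$ by $\dosi{0}\bigl(x_0,v_0=v_{i},t_0=t_{i}-\tfrac{x_{i}-x_0}{v_{i}}\bigr)$, i.e.\ by $\dositofX{0}{i}$ once $v_{i}$ is read as the integration variable, and outside the region both sides are $0$; this is \cref{eq:dos_tof_vlad2_full}. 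The same identity also follows by induction on $i$ from \cref{eq:dos_tof_vlad2}, with \cref{thm:trajectory} supplying the composition of the successive time shifts, but the direct substitution is shorter.

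The step I expect to be delicate is purely the support bookkeeping around the delta collapse: one must make sure that integrating $\delta(t_{i}-c)$ over the finite interval $[\tmini{i},\tmaxi{i}]$ does not silently drop contributions and that the result can still be written with the clean limits $\int_{\vmin}^{\vmax}$ --- both are taken care of by the vanishing of $\dosi{i}$ off its physical support (\cref{thm:tof_transition_dos}) together with \cref{thm:tmini_tmaxi}. The relabelling $v_{i}\leftrightarrow v_{i+1}$ must be applied consistently to the arguments of $\dosi{i}$ and to the measure, but that is immediate from \cref{as:velocity_tof,as:force_tof}; no new estimate is required.
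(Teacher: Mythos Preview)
Your argument is correct and matches the paper's own proofs: your delta-collapse-and-relabel derivation of \cref{eq:dos_tof_vlad2} is exactly the paper's Proof~2, and your direct substitution of \cref{eq:dos_tof_full} into \cref{thm:dosargvintegtof_given_by_integration} to get \cref{eq:dos_tof_vlad2_full} is exactly the second half of the paper's Proof~1. Your explicit handling of the support bookkeeping around the $\delta$-integration is in fact more careful than what the paper writes down.
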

\begin{proof}[Proof 1]
By integrating \cref{eq:dos_tof_increment} over $v_{i+1}$ and using \cref{eq:dosargvintegtof_given_by_integration} one obtains \cref{eq:dos_tof_vlad2}. By integrating \cref{eq:dos_tof_full} over $v_{i}$ and using \cref{eq:dosargvintegtof_given_by_integration} one obtains \cref{eq:dos_tof_vlad2_full}. The equivalence of \cref{eq:dos_tof_vlad2,eq:dos_tof_vlad2_full} can be proven using \cref{thm:tof_transition_dos,thm:trajectory}.
\end{proof}
\begin{proof}[Proof 2]
Performing the integration over $t_{i}$ in \cref{eq:dos_tof_vlad1} implies \cref{eq:dos_tof_vlad2} ($v_{i}$ in \cref{eq:dos_tof_vlad1} and $v_{i+1}$ in \cref{eq:dos_tof_vlad2} are only integration variables, they can be renamed).
\end{proof}
\section{\label{sec:data_analysis}Analysis of experimental data}
Time dependent densities of states $\dosargvintegtof{i}$ can be measured at several positions $x_{i}$. Unknown parameters $\vmin$, $\vmax$, $\tmini{0}$ and $\tmaxi{0}$, and unknown function $\dosargitof{0}$ (time resolved velocity spectrum of emitted particles) can be determined on the basis of the measured data with the help of formulae derived in \cref{sec:formulae} (see mainly \cref{thm:tof_transition_dos,thm:dosargvintegtof_given_by_integration,thm:dosargvintegtof_given_by_integration_2,thm:tmini_tmaxi}), and (constrained) optimization techniques as discussed in \cref{process-stc-sec:data_analysis_guidelines} in \cite{Prochazka2022statphys_process_stc}.

If we put $M=1$, $x_{i}=0$, $x_{i+1}=x$, $t_{i}=0$ $t_{i+1}=t$, $\vmin=v_1$, $\vmax=v_2$, $\tmini{i}=0$ and $\tmaxi{i}=\Delta T$ then \cref{eq:dos_tof_vlad1,eq:dos_tof_vlad2} are equivalent to eq.~(2) in \cite{Vlad1984} where an extension of the time-of-flight (TOF) method for determination of the time resolved velocity spectrum of particles emitted in intense bursts has been presented for the first time. Many useful comments to the time resolved velocity spectrometry method are in \cite{Vlad1984} (including numerical solutions and tests). 

In \cite{Rezac2012} this method is called \emph{extended TOF method} to distinguish it from \emph{basic TOF method} in which velocity spectrum of particles is determined independently on the emission time. It is shown in \cite{Rezac2012} under which conditions the former method is reduced to the later one (in the case of relatively short intense burst in comparison to the time of flight of particles from source to a detector). The basic TOF method provides less information, but it is significantly easier to use it from both the experimental and data analysis point of view (it is sufficient to use only one detector in sufficiently large distance from the source ensuring that $\Delta T$ is much smaller than the time needed by an emitted particle to travel from the source to the detector).

		Both the types of TOF methods are widely known and have been adapted and successfully applied in various experiments. E.g., with the help of the extended TOF method time resolved neutron energy spectra from D(d,n)$^3$He fusion reactions were determined in \cite{Rezac2012} using analog Monte Carlo reconstruction method (AMCRT). In sect.~2.2.1 in \cite{Rezac2012} several other existing reconstruction methods (algorithms) are mentioned. Efficiency of different methods depends on several factors including the dependence of the density of states $\dosargitof{0}$ (i.e., also measured densities of states $\dosargvintegtof{i}$) which one is trying to determine.

The basic TOF method was applied, e.g., to experimental data of electrons and ions emitted by laser-produced plasmas in \cite{Krasa2007,Krasa2013,Krasa2018,Krasa2020}; further details to TOF spectra for mapping of charge density of ions produced by laser are in \cite{Krasa2014}. 

\section{\label{sec:generalization}Generalization of time-of-flight methods}

It has been shown in \cref{sec:model} that with the help of stochastic STC process introduced in \cref{process-stc-sec:model_stochastic_process} in \cite{Prochazka2022statphys_process_stc} leads to the well known time-of-flight (TOF) method for determination of time resolved velocity spectrum of emitted particles from a source (TOF-TV). There are other experimental techniques based on measurement of TOF, such as the time-of-flight mass spectrometry (TOF-MS). This method uses an electric field of known strength to determine mass-to-charge ratio of particles (ions). An electric and magnetic fields of known strength are commonly used (in various configurations) to determine mass and electric charge of charged particle by measuring and analysing trajectories of the particles with the help of the Lorenz force. All time-of-flight methods and many methods of mass spectrometry have something in common. They concern particle transport phenomena under various conditions and their aim is to determine properties of the particles. 

Each of the methods is typically suitable for determination of partial spectra being functions of only some variables characterizing properties of particles emitted from a source such as mass, electric charge, velocity (energy) or time of their emission from the source. E.g., the TOF-TV method allows determination of time resolved velocity spectrum (function of ``only'' two random variables characterizing the emitted particles). With the help of stochastic STC process it is possible to generalize the formulae in \cref{sec:formulae} (TOF-TV method) by taking into account:
\begin{enumerate}
\item{non-zero external force}
\item{general initial and final positions and velocities of particles}
\item{various properties of particles such as mass, electric charge, etc.}
\end{enumerate}
Application of this \emph{generalized} TOF-TV method to data can be more complicated than application of the TOF-TV method to data (it may be necessary to consider more random variables). It requires more experimental information. One can take advantage of various experimental methods determining the ``partial'' spectra (integrated over some of the variables) to constrain the ``full'' spectrum, see general guidelines in \cref{process-stc-sec:data_analysis_guidelines} in \cite{Prochazka2022statphys_process_stc}. One can then derive a more general equation than \cref{eq:dos_tof_vlad2}. This allows to study forces acting on particles and their properties (some of them may be specified by random variables). Inertial mass increase in dependence on velocity may be also studied on the basis of experimental data under these conditions \cite{Lokajicek2017hamiltonian}.


\section{\label{sec:conclusion}Conclusion}
Stochastic STC process introduced in \cref{process-stc-sec:model} in \cite{Prochazka2022statphys_process_stc} can significantly help to improve existing or develop new techniques of measurement of properties of particles moving in an external force field and being specified by random variables. The measurement is essential for characterization of various sources of emitting particles (see, e.g., laser-produced plasmas mentioned in \cref{sec:data_analysis}, or development of deuterium z-pinch as a powerful source of multi-MeV ions and neutrons \cite{Klir2016}). The sources of known properties can be used for various applications. The new types of sources of emitted particles place extra demands on particle detectors (see, e.g., design of a scintillator calorimeter for laser-plasma characterization, or magnetic electron spectrometer spectrometer \cite{Krupka2021}).  

With the help of stochastic STC process it is possible to describe in a unified way motion of particles in an external force field and many other particle transport phenomena which looks very distinct at first glance, such as transmission of light through sequence of polarizers \cite{Prochazka2022statphys_polarizers}. Several other applications of stochastic STC process for description of (physical) systems are discussed in \cite{Prochazka2022statphys_process_stc}.

%

\end{document}